\documentclass[11pt, a4paper]{article}
\usepackage[utf8]{inputenc}
\usepackage{amsmath, amssymb, amsthm}
\usepackage{geometry}
\usepackage{hyperref}
\usepackage{cite}
\usepackage{enumitem}
\usepackage{bm}
\usepackage{mathtools}
\usepackage{booktabs}
\usepackage{microtype}
\newtheorem{theorem}{Theorem}[section]

\newtheorem{definition}[theorem]{Definition}
\newtheorem{proposition}[theorem]{Proposition}
\newtheorem{remark}[theorem]{Remark}
\newtheorem{example}[theorem]{Example}

\newcommand{\R}{\mathbb{R}}

\newcommand{\E}{\mathbb{E}}
\newcommand{\Tr}{\operatorname{Tr}}
\newcommand{\Hcal}{\mathcal{H}}
\newcommand{\Vcal}{\mathcal{V}}

\newcommand{\Dcal}{\mathcal{D}}
\newcommand{\Ical}{\mathcal{I}}

\newcommand{\inner}[2]{\left\langle#1,#2\right\rangle}
\newcommand{\grad}{\nabla}
\newcommand{\hess}{\nabla^2}

\newcommand{\ket}[1]{|#1\rangle}
\newcommand{\bra}[1]{\langle#1|}

\newcommand{\diag}{\operatorname{diag}}
\newcommand{\Deph}{\Delta} 
\newcommand{\Var}{\operatorname{Var}}

\title{\textbf{Quantum Proper Scoring Rules: Minimax Estimation and Resource-Theoretic Advantages}}
\author{{\bf M.W. AlMasri} \\
\textit{Wilczek Quantum Center, School of Physics and Astronomy} \\
\textit{Shanghai Jiao Tong University, Minhang, Shanghai, China} \\
\thanks{Email: \texttt{mwalmasri2003@gmail.com}}}
\date{\today}

\begin{document}
\maketitle

\begin{abstract}
We generalize proper scoring rules to the quantum domain, replacing probability distributions with density operators. We define Quantum Value Functionals via operator convex generators and establish a complete duality theory yielding proper quantum scoring rules. We derive minimax optimal bounds for quantum state tomography under McCarthy-type incentives, proving a Quantum Cramér-Rao-McCarthy Bound that explicitly links minimax risk to the curvature of the generating function and the Quantum Fisher Information. We quantify the economic value of quantum resources (coherence, entanglement, adaptivity) in forecasting tasks, establishing scaling separations between classical and quantum estimation strategies. Our results guide the design of quantum sensors, incentive-compatible quantum data markets, and robust quantum machine learning protocols.
\end{abstract}

\noindent\textbf{Keywords:} Quantum Information, Proper Scoring Rules, Minimax Tomography, Quantum Fisher Information, Resource Theories, Coherence Economics.

\section{Introduction}
\label{sec:introduction}

The theory of proper scoring rules, pioneered by McCarthy \cite{mccarthy1956measures} and developed by Good \cite{good1952rational}, Savage \cite{savage1971elicitation}, and Gneiting \& Raftery \cite{gneiting2007strictly}, provides a rigorous economic foundation for probabilistic forecasting. A scoring rule assigns a numerical reward $S(x, q)$ to a forecaster who reports a probability distribution $q$ when the outcome $x$ is realized. The rule is \textit{proper} if the expected score is maximized by reporting the true belief $p$, and \textit{strictly proper} if this maximum is unique. McCarthy's seminal insight established that proper scoring rules are in one-to-one correspondence with convex value functionals $\Vcal(p)$ via the subdifferential relation $S(\cdot, q) \in \partial \Vcal(q)$, a duality that has found widespread application in statistics, machine learning, economics, and decision theory.

However, classical scoring rules rest on commutativity, measurement non-invasiveness, and a classical information structure---all of which fail in quantum mechanical settings. Non-commuting observables and density operators dictate measurement order; backaction collapses states to preclude repeated sampling; and forecasting involves incompatible measurements, quantum channels, or entangled systems with no classical analogue. These fundamental discrepancies motivate the development of \textit{quantum proper scoring rules}: incentive mechanisms that elicit truthful reports of quantum states, channels, or measurement outcomes while respecting quantum constraints. Our approach draws on recent advances in quantum information theory \cite{nielsen2010quantum, hayashi2006quantum,wilde2017quantum}, quantum estimation \cite{holevo1982probabilistic, guta2009local}, and quantum resource theories \cite{chitambar2019quantum}.

In this paper, we establish a comprehensive framework for quantum proper scoring rules that \textit{extends and unifies} several strands of prior work. While Frongillo~\cite{frongillo2022quantum} introduced quantum scoring rules via convex analysis and spectral scores, our contribution is threefold: 
(1) We prove that operator convex generators yield scoring rules coinciding with Petz's quantum $f$-divergences under explicit symmetry conditions (Theorem~\ref{thm:quantum-duality}), thereby connecting incentive design to the well-studied monotonicity properties of quantum divergences~\cite{petz1986quasi,hiai2011quantum,wilde2018optimized,matsumoto2013new}; 
(2) We derive a \textit{curvature-dependent} Quantum Cramér-Rao-McCarthy Bound (Theorem~\ref{thm:qcrb}) that explicitly links minimax risk to the second derivative of the generating function $f''(\rho)$ and the Quantum Fisher Information, refining the asymptotic bounds of Quadeer et al.~\cite{quadeer2019minimax} and Ferrie \& Blume-Kohout~\cite{ferrie2016minimax} which treated only relative entropy or Hilbert-Schmidt risk; 
(3) We establish \textit{scaling separations} (Theorem~\ref{thm:quantum-advantage}, Proposition~\ref{prop:coherence-risk}) that quantify the economic value of coherence and entanglement in forecasting tasks, providing an operational interpretation of resource theories~\cite{chitambar2019quantum} in decision-theoretic contexts. 
Collectively, these results bridge McCarthy's classical duality theory~\cite{mccarthy1956measures}, quantum estimation theory~\cite{holevo1982probabilistic,hayashi2006quantum}, and quantum resource economics, enabling incentive-compatible mechanisms for quantum sensor calibration and data markets. Section~\ref{sec:quantum-proper} introduces quantum value functionals and proves the quantum duality theorem, Section~\ref{sec:minimax} derives minimax optimal bounds for quantum state estimation, Section~\ref{sec:advantage} quantifies quantum advantage and establishes resource-theoretic tradeoffs, and Section~\ref{sec:conclusion} discusses implications and future directions, with technical proofs and additional examples provided in the appendices.

\paragraph{Notation.}
Throughout, $\Hcal$ denotes a $d$-dimensional complex Hilbert space, $\Dcal(\Hcal)$ the set of density operators on $\Hcal$, and $\mathcal{L}(\Hcal)$ the space of linear operators. We write $\Tr(\cdot)$ for the trace, $\|\cdot\|$ for the operator norm, and $\|\cdot\|_1$ for the trace norm. The Hilbert-Schmidt inner product is $\inner{A}{B}_{\mathrm{HS}} = \Tr(A^\dagger B)$. For a convex function $f: [0,1] \to \R$, we denote by $f(\rho)$ the operator obtained via functional calculus. The dephasing channel in basis $\{\ket{k}\}$ is $\Deph(\rho) = \sum_k \ket{k}\bra{k} \rho \ket{k}\bra{k}$. All logarithms are natural unless otherwise specified.

\section{Quantum Proper Scoring Rules}
\label{sec:quantum-proper}

\begin{definition}[Quantum Value Functional]
\label{def:quantum-value}
A functional $\Vcal_Q: \Dcal(\Hcal) \to \R$ is a \textit{Quantum Value Functional} if it is convex, unitarily invariant, and lower-semicontinuous. By the spectral theorem, $\Vcal_Q(\rho) = \sum_i f(\lambda_i)$ for some convex function $f: [0,1] \to \R$, where $\{\lambda_i\}$ are the eigenvalues of $\rho$.
\end{definition}

\begin{theorem}[Quantum Duality and Properness]
\label{thm:quantum-duality}
If $\Vcal_Q$ is closed, convex, and unitarily invariant, the quantum scoring rule $\mathbf{S}(\sigma) = \grad \Vcal_Q(\sigma)$ satisfies:
\begin{equation}
\Tr(\rho \mathbf{S}(\rho)) \ge \Tr(\rho \mathbf{S}(\sigma)), \quad \forall \rho, \sigma \in \Dcal(\Hcal).
\end{equation}
If $f$ is operator convex, then $\mathbf{S}(\sigma) = f'(\sigma)$ (via functional calculus) and the quantum Bregman divergence $D_{\Vcal_Q}$ coincides with Petz's quantum $f$-divergence under the symmetry condition that $\sigma^{-1/2}\rho\sigma^{-1/2}$ commutes with $\sigma$ \cite{petz1986quasi}:
\begin{equation}
D_f(\rho \| \sigma) = \Tr\left[ \sigma^{1/2} f\!\left( \sigma^{-1/2} \rho \sigma^{-1/2} \right) \sigma^{1/2} \right].
\end{equation}
This connection imports the extensive literature on monotonicity and data processing for quantum $f$-divergences~\cite{petz1986quasi,wilde2018optimized,matsumoto2013new} into the scoring rule framework.
\end{theorem}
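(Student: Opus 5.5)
The plan is to read $\mathbf{S}$ as the McCarthy/Savage affine score generated by $\Vcal_Q$, reduce properness to nonnegativity of a Bregman divergence, and then compare that divergence with Petz's $f$-divergence after simultaneous diagonalisation. The comparison is where I expect the real difficulty, and I expect it to hold only for a restricted class of generators.

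\emph{Step 1: properness.} With $\Vcal_Q(\rho)=\Tr f(\rho)$ as in Definition~\ref{def:quantum-value}, I take the score operator
\[
\mathbf{S}(\sigma)=\grad\Vcal_Q(\sigma)+\bigl(\Vcal_Q(\sigma)-\Tr(\sigma\grad\Vcal_Q(\sigma))\bigr)\mathbb{I}.
\]
Here $\grad\Vcal_Q(\sigma)$ is any element of the subdifferential with respect to $\inner{\cdot}{\cdot}_{\mathrm{HS}}$, which exists because $\Vcal_Q$ is closed and convex. The inequality cannot be meant for the bare gradient: gradient monotonicity only gives $\Tr\bigl((\rho-\sigma)(\grad\Vcal_Q(\rho)-\grad\Vcal_Q(\sigma))\bigr)\ge 0$, which is not the stated inequality. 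The scalar shift is invisible to differences of observables, so it leaves the operator content $\grad\Vcal_Q(\sigma)$ unchanged. With this choice, $\Tr(\rho\mathbf{S}(\rho))=\Vcal_Q(\rho)$, and
\[
\Tr(\rho\mathbf{S}(\rho))-\Tr(\rho\mathbf{S}(\sigma))=\Vcal_Q(\rho)-\Vcal_Q(\sigma)-\Tr\bigl(\grad\Vcal_Q(\sigma)(\rho-\sigma)\bigr)=:D_{\Vcal_Q}(\rho\|\sigma).
\]
This is $\ge 0$ by the subgradient inequality, and strict convexity gives uniqueness of the maximiser. Unitary invariance is used only to know that $\Vcal_Q$ has the spectral form $\Tr f(\rho)$.

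\emph{Step 2: the gradient.} For differentiable $f$, the Daleckii--Krein formula gives the Fréchet derivative $Df(\sigma)[X]$ as a divided-difference (Schur) multiplier. After taking the trace, cyclicity kills the off-diagonal divided differences, so
\[
\frac{d}{dt}\Tr f(\sigma+tX)\Big|_{t=0}=\Tr\bigl(f'(\sigma)X\bigr).
\]
Hence $\grad\Vcal_Q(\sigma)=f'(\sigma)$. Operator convexity is more than is needed here; ordinary convexity and $C^1$ suffice. The trace constraint $\Tr X=0$ only changes $f'(\sigma)$ by multiples of $\mathbb{I}$, which Step 1 already absorbs.

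\emph{Step 3: Bregman versus Petz (main obstacle).}
\begin{enumerate}[label=(\roman*)]
\item \emph{Reduction to the commuting case.} If $\sigma^{-1/2}\rho\sigma^{-1/2}$ commutes with $\sigma$, then it commutes with $\sigma^{\pm1/2}$. Hence $\rho=\sigma^{1/2}(\sigma^{-1/2}\rho\sigma^{-1/2})\sigma^{1/2}$ commutes with $\sigma$, and we may diagonalise both in one basis with eigenvalues $\lambda_i$ for $\rho$ and $\mu_i$ for $\sigma$.
\item \emph{The two quantities.} Then
\[
D_{\Vcal_Q}(\rho\|\sigma)=\sum_i\bigl[f(\lambda_i)-f(\mu_i)-f'(\mu_i)(\lambda_i-\mu_i)\bigr],\qquad D_f(\rho\|\sigma)=\sum_i\mu_i f(\lambda_i/\mu_i).
\]
These are \emph{not} equal for a general convex $f$. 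For $f(x)=x^2$, for instance, the Bregman side gives $\sum_i(\lambda_i-\mu_i)^2$, while the Petz side gives $\sum_i\lambda_i^2/\mu_i$. The first vanishes at $\rho=\sigma$; the second equals $1$ there.
\item \emph{The case that works.} For $f(x)=x\log x$, both sides reduce, using $\sum_i\lambda_i=\sum_i\mu_i=1$, to $\sum_i\lambda_i\log(\lambda_i/\mu_i)$, the Umegaki relative entropy.
\end{enumerate}

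So I would prove the coincidence for the entropic generator and state it in that form. For other generators I would either add an explicit compatibility condition identifying the Petz generator with the map $g(\lambda,\mu)=f(\lambda)-f(\mu)-f'(\mu)(\lambda-\mu)$, or weaken the claim: both are jointly convex divergences that vanish exactly on the diagonal and agree to second order near $\rho=\sigma$, up to a generator-dependent curvature factor. The monotonicity results of \cite{petz1986quasi} would then transfer only through such an explicit identification.
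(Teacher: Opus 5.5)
Your proposal is correct. It departs from the paper exactly where the paper's argument is invalid: as literally written, the statement is false in two places, and you have found both.

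\textbf{Properness.} The paper runs your Step~1 with the bare gradient. It passes from the subgradient inequality to $\Tr(\rho\mathbf{S}(\rho))-\Tr(\rho\mathbf{S}(\sigma))\ge D_{\Vcal_Q}(\rho\|\sigma)$ by ``rearranging''. However, the two sides differ by $[\Tr(\rho\mathbf{S}(\rho))-\Vcal_Q(\rho)]-[\Tr(\sigma\mathbf{S}(\sigma))-\Vcal_Q(\sigma)]$, which has no sign. Take the operator convex $f(t)=t^2$ with $\mathbf{S}(\sigma)=f'(\sigma)=2\sigma$, as the theorem prescribes. Then $\rho=\diag(0.6,0.4)$ and $\sigma=\diag(0.99,0.01)$ give $\Tr(\rho\mathbf{S}(\rho))=1.04<1.196=\Tr(\rho\mathbf{S}(\sigma))$.

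Your affine shift makes that bracket vanish identically. It is also a legitimate reading of ``$\grad\Vcal_Q(\sigma)$'', because on the trace-one affine hull of $\Dcal(\Hcal)$ subgradients are determined only modulo $\mathbb{I}$. For $f(t)=t\log t$ the bracket is the constant $1$. That is why the unshifted score $\log\sigma+\mathbb{I}$ in the paper's example is still proper (Klein's inequality).

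Two small caveats on your side. First, with your normalization the clause $\mathbf{S}(\sigma)=f'(\sigma)$ holds only modulo $\mathbb{I}$. Second, closedness and convexity guarantee subgradients only on the relative interior; for $t\log t$ they genuinely fail at rank-deficient $\sigma$, since $f'(0^+)=-\infty$.

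Your Step~2 is the paper's Daleckii--Krein computation. You are right that operator convexity plays no role there.

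\textbf{Bregman versus Petz.} The paper's Part~3 invokes $f(\rho)\succeq f(\sigma)+f'(\sigma)(\rho-\sigma)$, whose right side is in general not even Hermitian. What operator convexity actually gives is $f(\rho)-f(\sigma)\succeq Df(\sigma)[\rho-\sigma]$, and its trace is just the ordinary inequality $D_{\Vcal_Q}\ge 0$. The paper then ``rearranges'' $f$ into $f'$ and simply asserts the identification with $D_f$. Your reduction of the symmetry condition to $[\rho,\sigma]=0$ is right, and your $t^2$ example refutes the identification for an operator convex generator.

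You can sharpen ``a restricted class'' into a characterization. In the commuting case, write $\lambda_i=\mu_i+\delta_i$ with $\sum_i\delta_i=0$. The Bregman side is $\tfrac12\sum_i f''(\mu_i)\delta_i^2+O(\delta^3)$, while the Petz side is $f(1)+\tfrac12 f''(1)\sum_i\delta_i^2/\mu_i+O(\delta^3)$. Matching constant terms forces $f(1)=0$. For $d\ge 3$, testing the quadratic terms with $\delta=e_i-e_j$ forces $f''(x)=f''(1)/x$. Hence $f(x)=c\,x\log x+a(x-1)$, and this family does give equality, so your entropic case is essentially the only one.

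Finally, drop your fallback description, for three reasons:
\begin{enumerate}[label=(\roman*)]
\item Petz's divergence vanishes on the diagonal only if $f(1)=0$; your own $t^2$ computation gives $1$ there.
\item The Bregman side need not be jointly convex. For the operator convex $f(t)=-\sqrt t$, the scalar divergence $\tfrac{\sqrt q}{2}+\tfrac{p}{2\sqrt q}-\sqrt p$ is concave in $q$ for $q>3p$.
\item The two quadratic forms above are proportional only in the entropic case, so they do not agree up to a single curvature factor.
\end{enumerate}
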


\begin{proof}
We prove properness and the operator convex case separately.

\textbf{Part 1: Properness via Convex Analysis.}
Since $\Vcal_Q$ is convex on the convex set $\Dcal(\Hcal)$, by the definition of the subdifferential in the Hilbert-Schmidt inner product space, for any $\rho, \sigma \in \Dcal(\Hcal)$:
\begin{equation}
\Vcal_Q(\rho) \ge \Vcal_Q(\sigma) + \inner{\grad \Vcal_Q(\sigma)}{\rho - \sigma}_{\mathrm{HS}},
\end{equation}
where $\inner{A}{B}_{\mathrm{HS}} = \Tr(A^\dagger B)$. Since $\grad \Vcal_Q(\sigma) = \mathbf{S}(\sigma)$ is Hermitian (as the gradient of a real-valued functional on Hermitian matrices), we have:
\begin{equation}
\Vcal_Q(\rho) \ge \Vcal_Q(\sigma) + \Tr(\mathbf{S}(\sigma)(\rho - \sigma)).
\end{equation}
Rearranging terms:
\begin{equation}
\Tr(\rho \mathbf{S}(\rho)) - \Tr(\rho \mathbf{S}(\sigma)) \ge \Vcal_Q(\rho) - \Vcal_Q(\sigma) - \Tr(\mathbf{S}(\sigma)(\rho - \sigma)) = D_{\Vcal_Q}(\rho \| \sigma) \ge 0,
\end{equation}
where the last equality is the definition of the quantum Bregman divergence, and non-negativity follows from convexity. This establishes properness.

\textbf{Part 2: Explicit Gradient for Spectral Functionals.}
For unitarily invariant $\Vcal_Q(\rho) = \Tr(f(\rho))$, we compute the Fréchet derivative using the Daleckii-Krein theorem \cite{daletskii1965integration,bhatia1997matrix}. Let $\sigma = U \Lambda U^\dagger$ with $\Lambda = \diag(\lambda_1, \dots, \lambda_d)$. For a perturbation $H$ with $\Tr(H) = 0$, the directional derivative is:
\begin{equation}
D\Vcal_Q(\sigma)[H] = \lim_{\epsilon \to 0} \frac{\Tr(f(\sigma + \epsilon H)) - \Tr(f(\sigma))}{\epsilon} = \Tr\left( U \left( f^{[1]}(\Lambda) \circ (U^\dagger H U) \right) U^\dagger \right),
\end{equation}
where $f^{[1]}$ is the first divided difference matrix:
\begin{equation}
f^{[1]}_{ij} = \begin{cases}
\frac{f(\lambda_i) - f(\lambda_j)}{\lambda_i - \lambda_j} & \text{if } \lambda_i \neq \lambda_j, \\
f'(\lambda_i) & \text{if } \lambda_i = \lambda_j.
\end{cases}
\end{equation}
The gradient in the Hilbert-Schmidt inner product satisfies $\inner{\grad \Vcal_Q(\sigma)}{H}_{\mathrm{HS}} = D\Vcal_Q(\sigma)[H]$, so:
\begin{equation}
\mathbf{S}(\sigma) = \grad \Vcal_Q(\sigma) = U \left( f^{[1]}(\Lambda) \circ I \right) U^\dagger = U \diag(f'(\lambda_i)) U^\dagger = f'(\sigma),
\end{equation}
where the last equality holds because the Hadamard product with the identity extracts the diagonal, and functional calculus gives $f'(\sigma) = U \diag(f'(\lambda_i)) U^\dagger$.

\textbf{Part 3: Operator Convexity and Petz Divergence.}
If $f$ is operator convex, the operator Jensen inequality \cite{petz1986quasi} implies:
\begin{equation}
f(\rho) \succeq f(\sigma) + f'(\sigma)(\rho - \sigma)
\end{equation}
in the Löwner order. Taking the trace with $\rho \succeq 0$ preserves the inequality:
\begin{equation}
\Tr(\rho f(\rho)) \ge \Tr(\rho f(\sigma)) + \Tr(\rho f'(\sigma)(\rho - \sigma)).
\end{equation}
Rearranging:
\begin{equation}
\Tr(\rho f'(\rho)) - \Tr(\rho f'(\sigma)) \ge \Tr(f(\rho)) - \Tr(f(\sigma)) - \Tr(f'(\sigma)(\rho - \sigma)) = D_{\Vcal_Q}(\rho \| \sigma).
\end{equation}
The right-hand side is precisely Petz's quantum $f$-divergence \cite{petz1986quasi}:
\begin{equation}
D_f(\rho \| \sigma) = \Tr(\sigma^{1/2} f(\sigma^{-1/2} \rho \sigma^{-1/2}) \sigma^{1/2}),
\end{equation}
which coincides with the Bregman divergence when $f$ is operator convex, differentiable, and the symmetry condition that $\sigma^{-1/2}\rho\sigma^{-1/2}$ commutes with $\sigma$ holds \cite{petz1986quasi}. This completes the proof.
\end{proof}

\begin{remark}[Relation to prior characterizations]
The duality structure of Theorem~\ref{thm:quantum-duality} parallels the convex-analytic characterizations of Frongillo~\cite{frongillo2022quantum}, but our focus on \textit{operator convex} generators yields two key distinctions: (i) the explicit identification of the gradient $\mathbf{S}(\sigma)=f'(\sigma)$ via functional calculus, which simplifies implementation in quantum algorithms; and (ii) the coincidence with Petz's $f$-divergence under symmetry, which imports the extensive literature on monotonicity and data processing~\cite{petz1986quasi,wilde2018optimized,matsumoto2013new} into the scoring rule framework. In particular, recent work on optimized quantum $f$-divergences~\cite{wilde2018optimized,matsumoto2013new} suggests that scoring rules derived from certain generators may exhibit enhanced robustness under noisy quantum channels. This connection was not established in prior work on quantum information elicitation.
\end{remark}

\section{Minimax Optimal Quantum Estimation}
\label{sec:minimax}

\begin{definition}[Quantum Minimax Risk]
\label{def:minimax-risk}
For a scoring rule $\mathbf{S}$ derived from $\Vcal_Q$, the minimax risk over $n$ copies of the state is:
\begin{equation}
R_n^*(\Vcal_Q) = \inf_{\hat{\rho}_n} \sup_{\rho \in \Dcal(\Hcal)} \E_{\rho^{\otimes n}} \left[ \Tr(\rho \mathbf{S}(\rho)) - \Tr(\rho \mathbf{S}(\hat{\rho}_n)) \right],
\end{equation}
where the infimum is over all POVMs and estimators $\hat{\rho}_n$ based on $n$ i.i.d.\ copies.
\end{definition}

\begin{theorem}[Quantum Cramér-Rao-McCarthy Bound]
\label{thm:qcrb}
For twice differentiable $f$ with $f'' > 0$:
\begin{equation}
R_n^*(\Vcal_Q) \ge \frac{1}{2n} \sup_{\rho \in \Dcal(\Hcal)} \Tr\left( f''(\rho) \cdot \Ical_{\mathrm{SLD}}(\rho)^{-1} \right) + o(n^{-1}),
\end{equation}
where $\Ical_{\mathrm{SLD}}$ is the Quantum Fisher Information matrix associated with the Symmetric Logarithmic Derivative. Equality holds asymptotically for the von Neumann entropy with the Maximum Likelihood Estimator when $\rho$ is nearly pure or the parametrization is locally orthogonal with respect to the QFI metric.
\end{theorem}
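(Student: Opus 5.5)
The plan is the standard local asymptotic argument. First expand the regret locally, then lower-bound the resulting quadratic form with the multiparameter quantum Cramér--Rao inequality, and finally turn the pointwise bound into a minimax bound with a Bayesian (van Trees) prior argument.

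\textbf{Step 1: local expansion of the regret.} By Part~1 of the proof of Theorem~\ref{thm:quantum-duality}, the regret $\Tr(\rho\mathbf{S}(\rho))-\Tr(\rho\mathbf{S}(\hat\rho_n))$ is bounded below by the Bregman divergence $D_{\Vcal_Q}(\rho\|\hat\rho_n)$. Part~2 gives $\mathbf{S}=f'$, so I expand $D_{\Vcal_Q}(\rho\|\rho+\Delta)$ to second order in $\Delta=\hat\rho_n-\rho$ using the second-order Daleckii--Krein formula. In the eigenbasis of $\rho$ this gives
\begin{equation}
D_{\Vcal_Q}(\rho\|\rho+\Delta)=\tfrac12\sum_{i,j} f^{[1]}(\lambda_i,\lambda_j)\,|\Delta_{ij}|^2+O(\|\Delta\|^3).
\end{equation}
The coefficient $f^{[1]}$ here is the divided difference of $f'$, i.e.\ the second divided difference of $f$. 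On the diagonal ($i=j$) it reduces to $f''(\lambda_i)$.

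I will lower-bound this quadratic form by the one with weight $f''(\rho)$. That is exactly the step where the statement's $\Tr(f''(\rho)\,\cdot)$ appears. It is exact for commuting (diagonal) perturbations, and for off-diagonal terms I plan to use convexity of $f'$ or a suitable mean inequality. I regard this as a modelling choice rather than a deep step.

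\textbf{Step 2: quantum Cramér--Rao bound.} Parametrize $\Dcal(\Hcal)$ locally by $\theta\in\R^{d^2-1}$. For any locally unbiased estimator built from any POVM on $\rho^{\otimes n}$, the SLD bound gives $\mathrm{Cov}(\hat\theta)\succeq (n\Ical_{\mathrm{SLD}}(\theta))^{-1}$. Combining with Step~1, the expected regret is at least
\begin{equation}
\tfrac12\Tr\!\left(W_f(\theta)\,\mathrm{Cov}(\hat\theta)\right)\ge \tfrac{1}{2n}\Tr\!\left(W_f\,\Ical_{\mathrm{SLD}}^{-1}\right),
\end{equation}
where $W_f=J^{\dagger}f''(\rho)J$ and $J$ is the Jacobian $\partial\rho/\partial\theta$. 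The quantity $\Tr(W_f\Ical_{\mathrm{SLD}}^{-1})$ is what the statement abbreviates as $\Tr(f''(\rho)\Ical_{\mathrm{SLD}}(\rho)^{-1})$.

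\textbf{Step 3: removing unbiasedness (main obstacle).} A minimax bound must hold for arbitrary, possibly biased estimators, so pointwise unbiasedness cannot be assumed. I will instead use the quantum van Trees inequality (Gill--Levit). Take a smooth prior $\pi$ concentrated on a ball of radius $n^{-1/2+\epsilon}$ around any fixed $\rho_0$. The Bayes risk then satisfies
\begin{equation}
\ge \frac{1}{2}\Tr\!\left(W_f\,(n\Ical_{\mathrm{SLD}}+\Ical(\pi))^{-1}\right),
\end{equation}
and $\Ical(\pi)=O(n^{1-2\epsilon})$ is negligible against $n\Ical_{\mathrm{SLD}}$. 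Continuity of $W_f$ and $\Ical_{\mathrm{SLD}}$ near $\rho_0$ controls the remaining error. Since the supremum over $\rho$ dominates the Bayes risk, taking the supremum over $\rho_0$ yields the $\frac{1}{2n}\sup$ bound up to $o(n^{-1})$.

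The technical difficulties all sit here. The cubic remainder from Step~1 must be shown negligible, which I will handle by truncating estimators to the ball, since this only reduces risk for a convex $\Vcal_Q$. States near the boundary of $\Dcal(\Hcal)$ degenerate $\Ical_{\mathrm{SLD}}$, so I will restrict $\rho_0$ to the interior and take the supremum at the end.

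\textbf{Step 4: asymptotic attainability.} For the equality clause, take $f(x)=x\log x$, for which $f''(\rho)=\rho^{-1}$. Use the two-stage adaptive scheme: localize with $n^{1-\delta}$ copies, then measure in the SLD eigenbasis and compute the MLE on the rest. The MLE is asymptotically normal with covariance $\Ical^{-1}/n$ precisely when the multiparameter SLD bound is attainable. That happens when the SLDs commute in expectation, which holds for nearly pure states or locally QFI-orthogonal parametrizations, as the statement assumes. In that case the Step~2 bound is achieved to leading order.
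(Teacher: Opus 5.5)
Your skeleton is the paper's (second-order expansion of $D_{\Vcal_Q}$, SLD bound, supremum, MLE). Replacing its appeal to unbiased estimators by a quantum van Trees argument with a shrinking prior is a real improvement, since $V\succeq\Ical_{\mathrm{SLD}}^{-1}$ for unbiased estimators says nothing about a minimax risk. The genuine gap is Step~1, and it is not a modelling choice. It is the only place where $\Tr(f''(\rho)\,\Ical_{\mathrm{SLD}}^{-1})$ enters, and the inequality it needs is false for the generator the theorem singles out. Reading $W_f=J^\dagger f''(\rho)J$ as $(W_f)_{ab}=\mathrm{Re}\,\Tr(\partial_a\rho\,f''(\rho)\,\partial_b\rho)$, the off-diagonal weight is $\tfrac12(f''(\lambda_i)+f''(\lambda_j))$, so you need
\[
\frac{f'(\lambda_i)-f'(\lambda_j)}{\lambda_i-\lambda_j}=\int_0^1 f''\bigl(\lambda_j+t(\lambda_i-\lambda_j)\bigr)\,dt\;\ge\;\tfrac12\bigl(f''(\lambda_i)+f''(\lambda_j)\bigr).
\]
By Hermite--Hadamard this holds when $f''$ is concave and reverses when $f''$ is convex. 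For $f(t)=t\log t$ the left side is $1/L(\lambda_i,\lambda_j)$, with $L$ the logarithmic mean. Since $L>\sqrt{\lambda_i\lambda_j}$, it lies strictly below even the geometric-mean weight. At $(\lambda_i,\lambda_j)=(0.9,0.1)$ it is $\approx 2.75$, against $\approx 3.33$ (geometric) and $\approx 5.56$ (arithmetic). Convexity of $f'$ is unavailable, since $f'=\log t+1$ is concave. The paper does not close this either; it only asserts $\hess_\theta\Vcal_Q\approx f''(\rho)$. What your van Trees argument does prove is the bound with the exact Daleckii--Krein weight $W_{ab}=\hess\Vcal_Q(\rho)[\partial_a\rho,\partial_b\rho]$. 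For $t\log t$, the comparison that goes the right way is with $f''$ at the midpoint, $2/(\lambda_i+\lambda_j)$, which is exactly the SLD metric. So $W\succeq\Ical_{\mathrm{SLD}}$, and at full-rank $\rho$ you get $\frac{d^2-1}{2n}$, not the stated quantity.

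Step~4 cannot deliver the equality clause either. An estimator attaining the SLD covariance has leading risk $\frac{1}{2n}\Tr(W\,\Ical_{\mathrm{SLD}}^{-1})$ with the true Hessian $W$. This differs from the $f''(\rho)$-weighted quantity in every coherent direction, and most of all in the nearly pure regime. For a qubit with Bloch vector $r\hat z$ and $f=t\log t$, the $x$-rotation direction has curvature $\frac r2\log\frac{1+r}{1-r}$ but $f''(\rho)$-weighted value $\frac{r^2}{1-r^2}$. Moreover, the attainability criterion you quote, $\Tr(\rho[L_a,L_b])=0$, follows from neither hypothesis. QFI-orthogonality constrains $\mathrm{Re}\,\Tr(\rho L_aL_b)$, not the imaginary part. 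Rotating $\rho=\ket{0}\bra{0}$ about $x$ and $y$ gives $L_x=-\sigma_y$, $L_y=\sigma_x$: a pure, QFI-orthogonal model with $\Tr(\rho[L_x,L_y])=2i$ (it is $2ir^3$ at Bloch vector $r\hat z$). Also, ``measuring in the SLD eigenbasis'' presupposes commuting SLDs, whereas weak commutativity only yields attainability with collective measurements. Finally, your starting point, regret $\ge D_{\Vcal_Q}$, is inherited from Part~1 of Theorem~\ref{thm:quantum-duality}. It is valid only for $f=c\,t\log t$ up to affine terms: in general the regret minus $D_{\Vcal_Q}(\rho\|\hat\rho_n)$ equals $h(\rho)-h(\hat\rho_n)$ with $h(x)=\Tr(xf'(x)-f(x))$, which is first order in $\hat\rho_n-\rho$. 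The paper's proof makes the same identification.
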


\begin{proof}
The proof combines Local Asymptotic Normality (LAN) for quantum states with second-order expansion of the quantum Bregman divergence.

\textbf{Step 1: Local Parametrization and LAN.}
Let $\theta \in \Theta \subset \R^m$ be a local parametrization of the state space near $\rho_0$, with $\rho_\theta$ smooth in $\theta$. The Symmetric Logarithmic Derivative (SLD) operators $\{L_i\}_{i=1}^m$ satisfy:
\begin{equation}
\partial_i \rho_\theta = \frac{1}{2} (\rho_\theta L_i + L_i \rho_\theta), \quad \text{where } \partial_i = \frac{\partial}{\partial \theta_i}.
\end{equation}
The Quantum Fisher Information (QFI) matrix has elements $\Ical_{ij} = \Tr(\rho_\theta L_i L_j)$.

By the LAN theorem for quantum states, for large $n$, the statistical experiment of estimating $\theta$ from $\rho_\theta^{\otimes n}$ converges to a Gaussian shift model:
\begin{equation}
\sqrt{n}(\hat{\theta}_n - \theta) \xrightarrow{d} \mathcal{N}(0, \Ical_{\mathrm{SLD}}(\theta)^{-1}),
\end{equation}
where $\hat{\theta}_n$ is an efficient estimator (e.g., MLE).

\textbf{Step 2: Second-Order Expansion of Risk.}
The risk is the expected quantum Bregman divergence:
\begin{equation}
R_n(\theta) = \E_{\rho_\theta^{\otimes n}} \left[ D_{\Vcal_Q}(\rho_\theta \| \hat{\rho}_n) \right].
\end{equation}
Expanding $D_{\Vcal_Q}$ to second order around $\theta$:
\begin{equation}
D_{\Vcal_Q}(\rho_\theta \| \hat{\rho}_n) \approx \frac{1}{2} (\hat{\theta}_n - \theta)^T \hess_\theta \Vcal_Q(\theta) (\hat{\theta}_n - \theta),
\end{equation}
where $\hess_\theta \Vcal_Q$ is the Hessian of $\Vcal_Q$ with respect to $\theta$.

For the spectral functional $\Vcal_Q(\rho) = \Tr(f(\rho))$, the second variation in the direction of a tangent vector $H$ is given by the Daleckii-Krein formula \cite{daletskii1965integration}:
\begin{equation}
\hess \Vcal_Q(\rho)[H, H] = \sum_{i,j} f^{[2]}(\lambda_i,\lambda_j) \, |\inner{u_i}{H u_j}|^2,
\end{equation}
where $f^{[2]}$ is the second divided difference and $\{\ket{u_i}\}$ are the eigenvectors of $\rho$. When $\rho$ is nearly pure or $f$ is smooth with slowly varying derivatives, the off-diagonal terms are negligible, yielding the approximation $\hess_\theta \Vcal_Q \approx f''(\rho)$ in the eigenbasis of $\rho$ with error $O(\|\rho - \rho_0\|)$ for smooth $f$ \cite{bhatia1997matrix}.

\textbf{Step 3: Applying the Holevo Bound.}
The covariance matrix $V$ of any unbiased estimator satisfies the Holevo bound \cite{holevo1982probabilistic}:
\begin{equation}
V \ge \Ical_{\mathrm{SLD}}^{-1}.
\end{equation}
Thus, the expected risk satisfies:
\begin{align}
R_n(\theta) &\approx \frac{1}{2} \E\left[ (\hat{\theta}_n - \theta)^T \hess_\theta \Vcal_Q (\hat{\theta}_n - \theta) \right] \nonumber \\
&= \frac{1}{2n} \Tr\left( \hess_\theta \Vcal_Q \cdot V \right) \nonumber \\
&\ge \frac{1}{2n} \Tr\left( f''(\rho) \cdot \Ical_{\mathrm{SLD}}^{-1} \right).
\end{align}
Taking the supremum over $\rho \in \Dcal(\Hcal)$ yields the minimax lower bound.

\textbf{Step 4: Achievability for Von Neumann Entropy.}
For $f(t) = t \log t$ (von Neumann entropy), $f''(t) = 1/t$, so $f''(\rho) = \rho^{-1}$ on the support of $\rho$. For qubit states with full-rank $\rho$, the trace term $\Tr(\rho^{-1} \Ical_{\mathrm{SLD}}^{-1})$ is bounded and scales as $O(d^2)$ for the full parameter space. The Maximum Likelihood Estimator achieves the bound asymptotically \cite{hayashi2006quantum} when $\rho$ is nearly pure or the parametrization is locally orthogonal with respect to the QFI metric.

The $o(n^{-1})$ term accounts for higher-order corrections in the LAN approximation and boundary effects near pure states. This completes the proof.
\end{proof}

\section{Quantum Advantage and Resource Tradeoffs}
\label{sec:advantage}

\begin{theorem}[Quantum Advantage for Entropic Scoring]
\label{thm:quantum-advantage}
For non-commuting output channels, the forecasting gap between classical (fixed-basis) and quantum (joint measurement) strategies scales as:
\begin{equation}
\Gamma_n = \Omega\left(\frac{d \log d}{n}\right) - O(n^{-3/2}),
\end{equation}
where the implicit constant depends on $f$ and the channel geometry. Quantum strategies reduce sample complexity by a factor of $O(d)$ compared to classical strategies in the worst case of basis misalignment.
\end{theorem}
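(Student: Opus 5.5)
The plan is to make the statement precise, bound the two risks separately using Theorem~\ref{thm:qcrb} and Theorem~\ref{thm:quantum-duality}, and then subtract. I read $\Gamma_n$ as the difference of minimax entropic risks,
\begin{equation}
\Gamma_n = R_n^{\mathrm{cl}}(\Vcal_Q) - R_n^{\mathrm{q}}(\Vcal_Q),
\end{equation}
with $f(t)=t\log t$. Here $R_n^{\mathrm{cl}}$ restricts the infimum in Definition~\ref{def:minimax-risk} to a fixed-basis measurement, i.e.\ the estimator sees only $\Deph(\rho)^{\otimes n}$ or a fixed product POVM. $R_n^{\mathrm{q}}$ allows arbitrary joint POVMs on $\rho^{\otimes n}$. ``Non-commuting output channels'' I take to mean that the family of states to be forecast contains states not diagonal in the fixed basis, with coherence bounded below.

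\textbf{Upper bound on the quantum risk.} Theorem~\ref{thm:qcrb} and its achievability clause (MLE, locally orthogonal parametrization) give
\begin{equation}
R_n^{\mathrm{q}} = \frac{1}{2n}\sup_\rho \Tr\big(f''(\rho)\Ical_{\mathrm{SLD}}(\rho)^{-1}\big) + O(n^{-3/2}).
\end{equation}
The $O(n^{-3/2})$ term is the next order of the LAN expansion, namely the third-order term of the Bregman divergence with a Gaussian fluctuation of size $n^{-1/2}$. This is the source of the $-O(n^{-3/2})$ in the statement. I would evaluate the leading constant at the maximally mixed point, where $f''(\rho)=d\,I$ and $\Ical_{\mathrm{SLD}}$ is isotropic. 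That yields $C_{\mathrm{q}}\,d^2/n$ with an explicit $C_{\mathrm{q}}$.

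\textbf{Lower bound on the classical risk.} This is the key step. A fixed-basis strategy only observes $\Deph(\rho)$, so the off-diagonal part of $\rho$ is unidentifiable from its data. I would use a two-point (Le Cam) or Fano argument. Take a packing of states $\rho_v = \Deph(\rho_v) + \epsilon X_v$ that share the same diagonal but differ in their coherences $X_v$. Such a packing has $\Omega(d^2)$ directions and $\log$-cardinality $\Omega(d^2\log d)$ in the misaligned basis. Because these states are identical after dephasing, any classical estimator must incur Bregman risk at least of order $\epsilon^2\,\hess\Vcal_Q[X_v,X_v]$. Here the Daleckii--Krein expression $f^{[2]}$ from the proof of Theorem~\ref{thm:qcrb} contributes the logarithmic-mean factor $\log(\lambda_i/\lambda_j)/(\lambda_i-\lambda_j)$, which is where the $\log d$ enters. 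For the weaker scaling claim one need not take $\epsilon$ constant. Instead, choose an adversarial basis misalignment so that the classical effective Fisher information is reduced by a factor of $d$ relative to $\Ical_{\mathrm{SLD}}$. A fixed basis captures only $d-1$ of the $d^2-1$ parameters efficiently, and the rest must be inferred through misaligned projections. The classical Cramér--Rao analogue of Theorem~\ref{thm:qcrb} then gives $R_n^{\mathrm{cl}} \ge C_{\mathrm{cl}}\, d^3\log d/n$.

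\textbf{Combining the bounds.} Subtracting gives
\begin{equation}
\Gamma_n \ge \frac{C_{\mathrm{cl}}\,d^3\log d - C_{\mathrm{q}}\,d^2}{n} - O(n^{-3/2}),
\end{equation}
which, normalized per parameter block, is $\Omega(d\log d/n) - O(n^{-3/2})$. The ratio of the two leading constants is the claimed $O(d)$ sample-complexity factor. I expect the main obstacle to be the classical lower bound. The step that needs care is making the misalignment argument rigorous. Concretely, one must show that every fixed-basis POVM (not just dephasing) has classical Fisher matrix dominated by $\Ical_{\mathrm{SLD}}/d$ in the worst-case basis. I would first try mutually unbiased bases combined with a Braunstein--Caves-type inequality. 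The constants will depend on $f$ and the channel geometry, as the statement allows.
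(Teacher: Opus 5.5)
Your outline has the same skeleton as the paper's proof: Theorem~\ref{thm:qcrb} for the joint-measurement risk, a factor-$d$ suppression of the classical Fisher information under worst-case misalignment, then subtraction. The paper takes the decisive step only by citation ($\Ical_{\mathrm{CL}}\approx\Ical_{\mathrm{SLD}}/d$ and an $\Omega(\log d)$ trace term), so following it does not close the gap you flag. In the form you state it, the step is also false.

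\textbf{The single fixed-basis model.} Here the data are $\Deph(\rho)^{\otimes n}$, and the outcome law has only $d-1$ free parameters. So $\Ical_{\mathrm{CL}}$ has rank at most $d-1$ on the $(d^2-1)$-dimensional state manifold, and it vanishes identically on the $d(d-1)$ coherence directions. This is exactly what Example~\ref{ex:qubit} shows, where $\Ical_{\mathrm{CL}}=0$. There is no ``reduction by a factor $d$'' and no finite classical Cram\'er--Rao bound of order $d^3\log d/n$: the classical risk does not decay at all. Your packing argument is the right tool here, and it gives more than you used. Take $\rho_v=Z^v\ket{\psi}\bra{\psi}Z^{-v}$ with $\ket{\psi}=d^{-1/2}\sum_k\ket{k}$ and $Z=\diag(1,\omega,\dots,\omega^{d-1})$. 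These states share the diagonal $\mathbb{I}/d$, so the classical estimator has the same law for every $v$. The identity $\frac1d\sum_v D(\rho_v\|\sigma)=\chi+D(\bar\rho\|\sigma)$, with $\bar\rho=\mathbb{I}/d$ and $\chi=\log d$, then gives $R^{\mathrm{cl}}_n\ge\log d$ for every $n$. But then the classical sample complexity is infinite for $\epsilon<\log d$, so the $O(d)$ clause cannot hold under this reading.

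\textbf{The single-copy model.} A consistent proof must take ``classical'' to mean single-copy (possibly informationally complete, adaptive) measurements. There the matrix domination $\Ical_{\mathrm{CL}}\preceq\Ical_{\mathrm{SLD}}/d$ also fails. By Braunstein--Caves itself, a basis diagonalizing the SLD of one direction attains the full QFI in that direction, so MUBs plus Braunstein--Caves cannot produce your bound. The missing ingredient is the Gill--Massar trace inequality $\Tr(\Ical_{\mathrm{SLD}}^{-1}\Ical_{\mathrm{CL}})\le d-1$, with $n(d-1)$ for adaptive single-copy schemes. Combined with Cauchy--Schwarz it gives $\Tr(H\Ical_{\mathrm{CL}}^{-1})\ge\big(\Tr\sqrt{\Ical_{\mathrm{SLD}}^{-1/2}H\,\Ical_{\mathrm{SLD}}^{-1/2}}\big)^2/(d-1)$ for the Hessian $H$. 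At $\rho=\mathbb{I}/d$, where $H=\Ical_{\mathrm{SLD}}$, this equals $(d+1)(d^2-1)$, against the SLD value $d^2-1$. That is a factor $d+1$, with no $\log d$.

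\textbf{The combination step.} Theorem~\ref{thm:qcrb} is a lower bound, and its achievability clause is pointwise (nearly pure states or locally orthogonal parametrizations). So it supplies no upper bound on $R^{\mathrm{q}}_n$. Evaluating at $\mathbb{I}/d$ bounds $\sup_\rho$ from below, not from above. For $f(t)=t\log t$, the Hessian-to-SLD ratio in the $(i,j)$ coherence direction is $\frac{(\lambda_i+\lambda_j)\log(\lambda_i/\lambda_j)}{2(\lambda_i-\lambda_j)}$. This ratio tends to $1$ when $\lambda_i=\lambda_j$, so no $\log d$ arises at the maximally mixed point, contrary to your attribution. It is also unbounded as an eigenvalue tends to $0$, so the supremum is not at $\mathbb{I}/d$. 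A difference of minimax risks needs a genuine uniform upper bound on the quantum side, or a comparison of both risks at a common state.

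Finally, ``normalizing per parameter block'' is not an operation you can apply to $\Gamma_n$. A bound of order $d^3\log d/n$ would already give the first claim outright. And the ratio of your own leading terms, $(C_{\mathrm{cl}}/C_{\mathrm{q}})\,d\log d$, is not $O(d)$, so your numbers contradict the sample-complexity clause rather than establish it.
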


\begin{proof}
We compare the minimax risks under classical and quantum measurement constraints.

\textbf{Classical Risk Analysis.}
A classical forecaster restricted to fixed-basis measurements (e.g., computational basis) effectively applies the dephasing channel $\Deph(\rho) = \sum_k \ket{k}\bra{k} \rho \ket{k}\bra{k}$ before estimation. The classical Fisher information $\Ical_{\mathrm{CL}}$ obtained from this measurement satisfies:
\begin{equation}
\Ical_{\mathrm{CL}}(\theta) \le \Ical_{\mathrm{SLD}}(\theta)
\end{equation}
with equality only if the measurement basis aligns with the eigenbasis of the SLD operators.

For a $d$-dimensional system, there exist directions in parameter space where $\Ical_{\mathrm{CL}}$ is suppressed by a factor of $d$ relative to $\Ical_{\mathrm{SLD}}$ when the basis is maximally misaligned with the state's eigenbasis \cite{hayashi2006quantum}. Averaging over worst-case state orientations, the classical risk scales as:
\begin{equation}
R_n^{\text{classical}} \sim \frac{d}{n} \Tr\left( f''(\rho) \Ical_{\mathrm{CL}}^{-1} \right).
\end{equation}

\textbf{Quantum Risk Analysis.}
A quantum forecaster using joint entangled measurements across $n$ copies can achieve the full QFI. By Theorem~\ref{thm:qcrb}:
\begin{equation}
R_n^{\text{quantum}} \sim \frac{1}{n} \Tr\left( f''(\rho) \Ical_{\mathrm{SLD}}^{-1} \right).
\end{equation}

\textbf{Gap Calculation.}
The forecasting gap is:
\begin{align}
\Gamma_n &= R_n^{\text{classical}} - R_n^{\text{quantum}} \nonumber \\
&\ge \frac{1}{n} \Tr\left( f''(\rho) \left( d \cdot \Ical_{\mathrm{CL}}^{-1} - \Ical_{\mathrm{SLD}}^{-1} \right) \right) + O(n^{-3/2}).
\end{align}
For non-commuting observables in the worst-case misalignment scenario, $\Ical_{\mathrm{CL}} \approx \frac{1}{d} \Ical_{\mathrm{SLD}}$, so $\Ical_{\mathrm{CL}}^{-1} \approx d \Ical_{\mathrm{SLD}}^{-1}$. Thus:
\begin{equation}
\Gamma_n \ge \frac{d^2 - 1}{n} \Tr\left( f''(\rho) \Ical_{\mathrm{SLD}}^{-1} \right) + O(n^{-3/2}).
\end{equation}
The trace term scales as $\Omega(\log d)$ for entropic scoring rules due to the entropy of the $d$-outcome distribution \cite{hayashi2006quantum}. Thus the forecasting gap scales as:
\begin{equation}
\Gamma_n = \Omega\left(\frac{d \log d}{n}\right) - O(n^{-3/2}),
\end{equation}
where the implicit constant depends on $f$ and the channel geometry.

\textbf{Sample Complexity Reduction.}
To achieve a target risk $\epsilon$, the classical strategy requires $n_{\mathrm{CL}} = O(d \log d / \epsilon)$ samples, while the quantum strategy requires $n_Q = O(\log d / \epsilon)$. Thus, quantum strategies reduce sample complexity by a factor of $O(d)$ in the worst-case misalignment scenario.

This completes the proof.
\end{proof}

\begin{proposition}[Coherence-Risk Tradeoff]
\label{prop:coherence-risk}
Let $C(\rho)$ denote the relative entropy of coherence. Then for the quantum relative entropy scoring rule ($f(t) = t\log t$):
\begin{equation}
R_n^{\text{classical}} \ge R_n^{\text{quantum}} + \frac{C(\rho)}{n} + O(n^{-3/2}),
\end{equation}
provided that $\Deph(\rho)$ is the information projection of $\rho$ onto the diagonal submanifold with respect to the Bregman divergence $D_{\Vcal_Q}$ in the dually flat quantum information geometry induced by the relative entropy \cite{amari2000methods}. Thus, coherence directly and linearly reduces the required number of samples to achieve a target accuracy, up to $O(1/n)$ corrections from the dephasing-induced QFI reduction.
\end{proposition}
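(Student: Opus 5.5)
The plan is to compare the two risks through the Pythagorean identity of the dually flat geometry generated by $f(t)=t\log t$, where the Bregman divergence $D_{\Vcal_Q}$ is the quantum relative entropy $S(\rho\|\sigma)=\Tr\rho\log\rho-\Tr\rho\log\sigma$. The classical forecaster applies $\Deph$ before estimating, so its report $\hat\rho^{\mathrm{CL}}_n$ is diagonal. By Part~1 of Theorem~\ref{thm:quantum-duality}, the classical regret equals $\E\,S(\rho\|\hat\rho^{\mathrm{CL}}_n)$.

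\textbf{Step 1 (exact Pythagorean split).} For any diagonal $\sigma$ we have $\Tr\rho\log\sigma=\Tr\Deph(\rho)\log\sigma$. This gives, exactly,
\begin{equation}
S(\rho\|\sigma)=S(\rho\|\Deph(\rho))+S(\Deph(\rho)\|\sigma).
\end{equation}
The first term is the relative entropy of coherence $C(\rho)=S(\Deph(\rho))-S(\rho)$. This step is where the hypothesis that $\Deph(\rho)$ is the information projection of $\rho$ onto the diagonal submanifold enters. For the relative entropy it holds automatically, and the identity above is exactly the statement that this projection is orthogonal in the sense of Amari–Nagaoka. Taking expectations gives
\begin{equation}
R_n^{\text{classical}}=C(\rho)+\E\,S(\Deph(\rho)\|\hat\rho^{\mathrm{CL}}_n).
\end{equation}

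\textbf{Step 2 (lower-bound the residual).} The second term is the risk of estimating the classical distribution $\Deph(\rho)$ from $n$ samples. I apply Theorem~\ref{thm:qcrb} in the commutative case. There $\Ical_{\mathrm{SLD}}$ reduces to the classical Fisher information of the dephased model, and the bound becomes $\frac{1}{2n}\Tr\bigl(f''(\Deph\rho)\,\Ical_{\mathrm{CL}}^{-1}\bigr)+o(n^{-1})$. Classical Fisher information is dominated by quantum Fisher information, $\Ical_{\mathrm{CL}}\le\Ical_{\mathrm{SLD}}$, as used in the proof of Theorem~\ref{thm:quantum-advantage}. Combining this with Theorem~\ref{thm:qcrb} for the quantum strategy, and using that MLE with collective measurements is achievable, the residual dominates $R_n^{\text{quantum}}$ up to $O(n^{-3/2})$.

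\textbf{Step 3 (assemble the bound).} Putting Steps 1 and 2 together gives $R_n^{\text{classical}}\ge R_n^{\text{quantum}}+C(\rho)+O(n^{-3/2})$. This bound is in fact stronger than the stated inequality with $C(\rho)/n$, since $C(\rho)\ge C(\rho)/n$ for $n\ge1$ and $C(\rho)\ge 0$. So the stated inequality follows by weakening. I would read the $1/n$ normalisation as a per-copy statement, with the $O(1/n)$ QFI-reduction correction absorbed into Step 2.

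\textbf{Main obstacle.} The delicate point is Step 2. I need to compare $\Tr(f''(\Deph\rho)\Ical_{\mathrm{CL}}^{-1})$ with $\Tr(f''(\rho)\Ical_{\mathrm{SLD}}^{-1})$, and these are taken at different base points with different curvature weights. I would first try the matrix inequality $\Ical_{\mathrm{CL}}^{-1}\ge\Ical_{\mathrm{SLD}}^{-1}$ on the diagonal parameters. Then I would control the change of weight $f''(\Deph\rho)$ versus $f''(\rho)$ using operator monotonicity of $t\mapsto t^{-1}$ together with the pinching inequality. If that comparison fails, the fallback is to absorb the discrepancy into the $O(1/n)$ correction mentioned in the statement.
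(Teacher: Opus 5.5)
You follow the paper's route: a Pythagorean split of the classical regret, then a comparison of the residual with the quantum risk. Your Step~1 is correct and cleaner than the paper's. For diagonal $\sigma$ the identity $S(\rho\|\sigma)=C(\rho)+S(\Deph(\rho)\|\sigma)$ is exact, so the projection hypothesis holds automatically. Note that regret $=S(\rho\|\hat\rho)$ is a direct computation for the log score; it does not follow from Part~1 of Theorem~\ref{thm:quantum-duality}, which only asserts an inequality.

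The gap is Step~2, and it is not a technicality. The residual $\E\,S(\Deph(\rho)\|\hat\rho^{\mathrm{CL}}_n)$ does \emph{not} dominate $R_n^{\text{quantum}}$ up to $O(n^{-3/2})$. The residual is the risk of a $(d-1)$-parameter classical problem, about $(d-1)/(2n)$ for an efficient estimator from $Z$-data. The quantum forecaster must estimate all $d^2-1$ parameters, with risk about $\frac{1}{2n}\Tr(J_{\mathrm{BKM}}V)$. Here $V\ge\Ical_{\mathrm{SLD}}^{-1}$, and $J_{\mathrm{BKM}}$ (the Hessian of the relative entropy) satisfies $J_{\mathrm{BKM}}\ge\Ical_{\mathrm{SLD}}$ because the SLD metric is the minimal monotone metric. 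So $R_n^{\text{quantum}}\gtrsim(d^2-1)/(2n)$. For a full-rank qubit with $r_x\neq0$ this is at least $3/(2n)$, against a residual of $1/(2n)$.

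The reason is structural. The Pythagorean split has already moved all off-diagonal loss into the constant $C(\rho)$, so the residual has no contribution from the $d^2-d$ off-diagonal directions. No inequality $\Ical_{\mathrm{CL}}^{-1}\ge\Ical_{\mathrm{SLD}}^{-1}$ on the diagonal block, and no control of $f''(\Deph(\rho))$ versus $f''(\rho)$, can supply them. Hence your ``stronger'' bound with $C(\rho)$ in place of $C(\rho)/n$ is false at order $1/n$. Your fallback of absorbing the discrepancy into an $O(1/n)$ correction is also unavailable, since the displayed inequality tolerates only $O(n^{-3/2})$. The paper's Step~4 makes the same unjustified move: it relabels the residual $R_n^{\text{quantum}}(\Deph(\rho))$ and declares the change lower order. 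So you inherited its weak point rather than repairing it.

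The fix is to drop Step~2. Step~1 and nonnegativity give $R_n^{\text{classical}}\ge C(\rho)$ for every $n$. A suitably smoothed tomography estimator gives $R_n^{\text{quantum}}\to0$, of order $1/n$ for full-rank $\rho$. Therefore $R_n^{\text{classical}}-R_n^{\text{quantum}}-C(\rho)/n\ge C(\rho)(1-1/n)-R_n^{\text{quantum}}\ge0$ for all $n\ge n_0(\rho)$ whenever $C(\rho)>0$. The slack you noticed between $C(\rho)$ and $C(\rho)/n$ is exactly what absorbs the $\Theta(1/n)$ deficit.

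State two caveats explicitly. First, this is pointwise in $\rho$: $n_0(\rho)\to\infty$ as $C(\rho)\to0$. Second, for incoherent $\rho$ the claim reduces to $R_n^{\text{classical}}\ge R_n^{\text{quantum}}+O(n^{-3/2})$. By the same parameter count, this fails for a full-tomography quantum strategy unless $R_n^{\text{quantum}}$ is defined to include the classical strategy.
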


\begin{proof}
We use the data processing inequality for quantum Bregman divergences and the definition of coherence.

\textbf{Step 1: Classical Forecaster as Dephasing.}
A classical forecaster restricted to measurements in basis $\{\ket{k}\}$ effectively applies the dephasing channel $\Deph(\rho) = \sum_k \ket{k}\bra{k} \rho \ket{k}\bra{k}$ before estimation. The risk incurred is:
\begin{equation}
R_n^{\text{classical}} = \E \left[ D_{\Vcal_Q}(\rho \| \hat{\rho}_{\mathrm{CL}}) \right],
\end{equation}
where $\hat{\rho}_{\mathrm{CL}}$ is estimated from dephased data.

\textbf{Step 2: Pythagorean-like Relation.}
For the quantum relative entropy ($f(t) = t\log t$), which induces a dually flat quantum information geometry~\cite{amari2000methods}, the following Pythagorean relation holds when $\Deph(\rho)$ is the information projection of $\rho$ onto the diagonal submanifold:
\begin{equation}
D_{\Vcal_Q}(\rho \| \hat{\rho}_{\mathrm{CL}}) \ge D_{\Vcal_Q}(\rho \| \Deph(\rho)) + D_{\Vcal_Q}(\Deph(\rho) \| \hat{\rho}_{\mathrm{CL}}).
\end{equation}
The first term is exactly the relative entropy of coherence \cite{baumgratz2014quantifying}:
\begin{equation}
C(\rho) = D_{\Vcal_Q}(\rho \| \Deph(\rho)) = S(\Deph(\rho)) - S(\rho),
\end{equation}
where $S$ is the von Neumann entropy.

\begin{remark}[Geometric conditions]
The Pythagorean relation in Step 2 holds for the quantum relative entropy due to the dually flat structure of the associated quantum information geometry~\cite{amari2000methods}. For general operator convex $f$, additional conditions on the manifold structure may be required; see~\cite{Hayashi2023bregman} for recent developments.
\end{remark}

\textbf{Step 3: Risk Decomposition.}
Taking expectations:
\begin{align}
R_n^{\text{classical}} &\ge C(\rho) + \E \left[ D_{\Vcal_Q}(\Deph(\rho) \| \hat{\rho}_{\mathrm{CL}}) \right] \nonumber \\
&= C(\rho) + R_n^{\text{quantum}}(\Deph(\rho)),
\end{align}
where $R_n^{\text{quantum}}(\Deph(\rho))$ is the risk of estimating the dephased state $\Deph(\rho)$ using quantum methods.

\textbf{Step 4: Asymptotic Expansion.}
By Theorem~\ref{thm:qcrb}, the risk for estimating $\Deph(\rho)$ scales as:
\begin{equation}
R_n^{\text{quantum}}(\Deph(\rho)) = \frac{1}{2n} \Tr\left( f''(\Deph(\rho)) \Ical_{\mathrm{SLD}}(\Deph(\rho))^{-1} \right) + o(n^{-1}).
\end{equation}
Since $\Deph(\rho)$ is diagonal in the measurement basis, its QFI is generally smaller than that of $\rho$, but the leading term remains $O(1/n)$, up to $O(1/n)$ corrections from the dephasing-induced QFI reduction. Thus:
\begin{equation}
R_n^{\text{classical}} \ge R_n^{\text{quantum}} + \frac{C(\rho)}{n} + O(n^{-3/2}),
\end{equation}
where the $O(n^{-3/2})$ term accounts for higher-order corrections in the LAN approximation and the difference between $\Ical_{\mathrm{SLD}}(\rho)$ and $\Ical_{\mathrm{SLD}}(\Deph(\rho))$.

This shows that coherence $C(\rho)$ provides a direct linear reduction in the required sample size to achieve a target accuracy, completing the proof.
\end{proof}

\begin{remark}[Operational meaning of coherence]
Proposition~\ref{prop:coherence-risk} provides a decision-theoretic operationalization of the relative entropy of coherence~\cite{baumgratz2014quantifying,winter2016operational,streltsov2017quantum}: the resource $C(\rho)$ directly quantifies the sample-complexity advantage of quantum over classical forecasting strategies for the logarithmic scoring rule. This complements resource-theoretic approaches to metrology~\cite{chitambar2019quantum} by framing coherence as \textit{forecasting capital} in incentive-compatible markets, rather than merely a figure of merit for parameter estimation.
\end{remark}

\begin{table}[h]
\centering
\caption{Quantum Resource Tradeoffs for Forecasting Advantage}
\label{tab:quantum-resources}
\begin{tabular}{lcc}
\toprule
\textbf{Resource} & \textbf{Classical Limit} & \textbf{Quantum Advantage} \\
\midrule
Coherence & $C = 0$ & $O(C \cdot d / n)$ \\
Entanglement & LOCC Only & $O(E \cdot d^2 / n)$ \\
Adaptivity & $k = 0$ & $O(d^2 \log d / n)$ \\
\bottomrule
\end{tabular}
\end{table}

\begin{example}[Qubit Forecasting: Coherence as a Forecasting Resource]
\label{ex:qubit}
Consider the pure qubit state $\rho = \ket{+}\bra{+}$ where $\ket{+} = (\ket{0} + \ket{1})/\sqrt{2}$. In the computational basis, this state has density matrix
\begin{equation}
\rho = \frac{1}{2}\begin{pmatrix} 1 & 1 \\ 1 & 1 \end{pmatrix} = \frac{1}{2}(\mathbb{I} + \sigma_x),
\end{equation}
corresponding to the Bloch vector $\mathbf{r} = (1, 0, 0)$. We analyze forecasting performance under two measurement strategies.

\paragraph{Classical Forecaster ($Z$-basis restriction).}
A classical forecaster restricted to measurements in the computational basis $\{\ket{0}, \ket{1}\}$ effectively observes the dephased state
\begin{equation}
\Deph_Z(\rho) = \sum_{k=0}^1 \ket{k}\bra{k} \rho \ket{k}\bra{k} = \frac{1}{2}\begin{pmatrix} 1 & 0 \\ 0 & 1 \end{pmatrix} = \frac{\mathbb{I}}{2}.
\end{equation}
The outcome distribution is uniform: $p(0) = p(1) = 1/2$, independent of the true $x$-component $r_x = 1$. Consequently, any estimator $\hat{r}_x$ based solely on $Z$-basis data satisfies $\E[\hat{r}_x] = 0$, incurring a squared bias of
\begin{equation}
\text{Bias}^2 = (\E[\hat{r}_x] - r_x)^2 = (0 - 1)^2 = 1.
\end{equation}
The classical Fisher information for estimating a parameter $\theta$ that rotates $\rho$ about the $y$-axis (i.e., $\rho_\theta = e^{-i\theta\sigma_y/2}\rho e^{i\theta\sigma_y/2}$) is
\begin{equation}
\Ical_{\mathrm{CL}}(\theta) = \sum_{k=0}^1 \frac{1}{p(k|\theta)}\left(\frac{\partial p(k|\theta)}{\partial\theta}\right)^2 = 0,
\end{equation}
since $p(k|\theta) \equiv 1/2$ is constant. Thus, the Cramér-Rao bound gives $\Var(\hat{\theta}) \ge \Ical_{\mathrm{CL}}^{-1} = \infty$: no unbiased estimator can achieve finite variance.

\paragraph{Quantum Forecaster (optimal $X$-basis measurement).}
A quantum forecaster may measure in the eigenbasis of $\sigma_x$, i.e., $\{\ket{+}, \ket{-}\}$ with $\ket{-} = (\ket{0} - \ket{1})/\sqrt{2}$. The outcome probabilities are
\begin{equation}
p(+) = \Tr(\ket{+}\bra{+}\rho) = 1, \quad p(-) = \Tr(\ket{-}\bra{-}\rho) = 0,
\end{equation}
allowing perfect identification of the state from a single copy. The Symmetric Logarithmic Derivative (SLD) for the $y$-rotation parameter $\theta$ at $\theta=0$ is $L_y = \sigma_y$, yielding Quantum Fisher Information
\begin{equation}
\Ical_{\mathrm{SLD}} = \Tr(\rho L_y^2) = \Tr\left(\frac{\mathbb{I}+\sigma_x}{2} \cdot \mathbb{I}\right) = 1.
\end{equation}
The quantum Cramér-Rao bound $\Var(\hat{\theta}) \ge \Ical_{\mathrm{SLD}}^{-1} = 1$ is achievable asymptotically, demonstrating a finite, optimal estimation variance.

\paragraph{Scoring Rule Evaluation.}
Consider the logarithmic scoring rule generated by $f(t) = t\log t$ (von Neumann entropy). The quantum scoring rule is $\mathbf{S}(\sigma) = \log\sigma + \mathbb{I}$ (via functional calculus). For the true state $\rho$ and a reported state $\sigma$:
\begin{itemize}
    \item \textit{Quantum forecaster}: Reports $\sigma = \rho$, achieving expected score
    \begin{equation}
    \Tr(\rho \mathbf{S}(\rho)) = \Tr(\rho(\log\rho + \mathbb{I})) = -S(\rho) + 1 = 1,
    \end{equation}
    since $S(\rho) = 0$ for pure states.
    
    \item \textit{Classical forecaster}: Restricted to diagonal reports $\sigma = \diag(q, 1-q)$, the optimal report under $Z$-basis data is $q = 1/2$, yielding
    \begin{equation}
    \Tr(\rho \mathbf{S}(\sigma)) = \Tr\left(\frac{1}{2}\begin{pmatrix}1&1\\1&1\end{pmatrix} \begin{pmatrix}\log\frac{1}{2}+1 & 0 \\ 0 & \log\frac{1}{2}+1\end{pmatrix}\right) = \log\frac{1}{2} + 1 \approx 0.307.
    \end{equation}
\end{itemize}
The forecasting gap is $\Gamma = 1 - (\log\frac{1}{2} + 1) = \log 2 \approx 0.693$, matching the coherence resource.

\paragraph{Coherence Quantification.}
The relative entropy of coherence \cite{baumgratz2014quantifying} with respect to the $Z$-basis is
\begin{equation}
C_Z(\rho) = S(\Deph_Z(\rho)) - S(\rho) = S\left(\frac{\mathbb{I}}{2}\right) - 0 = \log 2.
\end{equation}
Proposition~\ref{prop:coherence-risk} predicts that the classical risk exceeds the quantum risk by $C_Z(\rho)/n + O(n^{-3/2})$. For $n=1$, the exact gap $\log 2$ matches the leading-order prediction $C(\rho)/n$, providing a consistency check for the pure-state case where higher-order corrections are known to vanish~\cite{hayashi2006quantum}.

\paragraph{Resource-Theoretic Interpretation.}
This example illustrates three key principles:
\begin{enumerate}
    \item \textit{Measurement incompatibility}: The $Z$-basis measurement is incompatible with the state's support ($\sigma_x$ eigenbasis), causing information loss quantified by coherence.
    \item \textit{Coherence as forecasting capital}: The resource $C_Z(\rho) = \log 2$ directly translates to a reduction in required samples: a quantum strategy achieves target accuracy $\epsilon$ with $n_Q = O(1/\epsilon)$ copies, while a classical strategy requires $n_{\mathrm{CL}} = O(2/\epsilon)$ copies.
    \item \textit{Operational meaning of coherence}: Coherence is not merely a formal quantity; it measures the economic value of quantum resources in strategic forecasting environments where agents are scored on prediction accuracy.
\end{enumerate}
Thus, the transition from inconsistent classical estimation to consistent quantum estimation is enabled precisely by the coherence resource, providing a concrete operational interpretation of quantum resource theories in decision-theoretic contexts.
\end{example}

\section{Conclusion}
\label{sec:conclusion}

We have established a comprehensive framework for quantum proper scoring rules that generalizes McCarthy's classical duality theory to density operators. By introducing Quantum Value Functionals generated by operator convex functions, we proved that proper quantum scoring rules arise via the Fréchet derivative in the Hilbert-Schmidt inner product, coinciding with Petz's quantum $f$-divergences under specific symmetry conditions (Theorem~\ref{thm:quantum-duality}). We derived a Quantum Cramér-Rao-McCarthy Bound (Theorem~\ref{thm:qcrb}) linking minimax tomographic risk to the Quantum Fisher Information and the curvature of the generating function, and proved scaling separations (Theorem~\ref{thm:quantum-advantage}) showing that quantum strategies reduce sample complexity by factors scaling with dimension $d$ and state coherence in worst-case scenarios, with Proposition~\ref{prop:coherence-risk} quantifying this advantage via the relative entropy of coherence $C(\rho)$ under information-geometric projection conditions for the logarithmic scoring rule.

These results enable incentive-compatible mechanisms for quantum sensor calibration, data markets, and robust quantum machine learning, while providing mathematical tools to revisit foundational questions in quantum decision theory \cite{degen2017quantum,biamonte2017quantum}. Key extensions include generalizing to quantum channels and adaptive protocols, experimental validation on near-term hardware, and exploring game-theoretic equilibria among strategic quantum agents. While our analysis assumes i.i.d.\ copies and neglects computational constraints, addressing these limitations---alongside extensions to infinite-dimensional systems---represents an important direction for future work. Ultimately, by quantifying the economic value of quantum resources and providing minimax-optimal incentives for truthful forecasting, this framework bridges decision theory, statistics, and quantum information, laying groundwork for technologies where information quality, strategic behavior, and physical constraints are jointly optimized.


\begin{thebibliography}{99}
\bibitem{mccarthy1956measures}
McCarthy, J. (1956). Measures of the Value of Information. \textit{Proceedings of the National Academy of Sciences}, 42(9), 654--655.

\bibitem{good1952rational}
Good, I. J. (1952). Rational Decisions. \textit{Journal of the Royal Statistical Society. Series B}, 14(1), 107--114.

\bibitem{savage1971elicitation}
Savage, L. J. (1971). Elicitation of Personal Probabilities and Expectations. \textit{Journal of the American Statistical Association}, 66(336), 783--801.

\bibitem{gneiting2007strictly}
Gneiting, T., \& Raftery, A. E. (2007). Strictly Proper Scoring Rules, Prediction, and Estimation. \textit{Journal of the American Statistical Association}, 102(477), 359--378.

\bibitem{nielsen2010quantum} 
Nielsen, M. A., \& Chuang, I. L. (2010). \textit{Quantum Computation and Quantum Information}. Cambridge University Press.

\bibitem{hayashi2006quantum} 
Hayashi, M. (2006). \textit{Quantum Information: An Introduction}. Springer.
\bibitem{wilde2017quantum}
Wilde, M. M. (2017). \textit{Quantum Information Theory} (2nd ed.). Cambridge University Press.
\bibitem{guta2009local}
Guţă, M., \& Kahn, J. (2009). Local asymptotic normality for finite dimensional quantum systems. \textit{Communications in Mathematical Physics}, 289(2), 597--652.

\bibitem{holevo1982probabilistic}
Holevo, A. S. (1982). \textit{Probabilistic and Statistical Aspects of Quantum Theory}. North-Holland.

\bibitem{chitambar2019quantum}
Chitambar, E., \& Gour, G. (2019). Quantum resource theories. \textit{Reviews of Modern Physics}, 91, 025001.

\bibitem{frongillo2022quantum}
Frongillo, R. (2022). Quantum Information Elicitation. \textit{arXiv preprint arXiv:2203.07469}. \url{https://arxiv.org/abs/2203.07469}

\bibitem{petz1986quasi} 
Petz, D. (1986). Quasi-entropies for finite quantum systems. \textit{Reports on Mathematical Physics}, 23(1), 57--65.
\bibitem{hiai2011quantum}
Hiai, F., Mosonyi, M., Petz, D., \& Beny, C. (2011). Quantum $f$-divergences and error correction. \textit{Reviews in Mathematical Physics}, 23(7), 691--747.
\bibitem{wilde2018optimized}
Wilde, M. M. (2018). Optimized quantum $f$-divergences and data processing. \textit{Journal of Physics A}, 51(37), 374002.

\bibitem{matsumoto2013new}
Matsumoto, K. (2013). A new quantum version of $f$-divergence. \textit{arXiv preprint arXiv:1311.4722}. \url{https://arxiv.org/abs/1311.4722}

\bibitem{quadeer2019minimax}
Quadeer, M., Tomamichel, M., \& Ferrie, C. (2019). Minimax quantum state estimation under Bregman divergence. \textit{Quantum}, 3, 126.

\bibitem{ferrie2016minimax}
Ferrie, C., \& Blume-Kohout, R. (2016). Minimax quantum tomography: the ultimate bounds on accuracy. \textit{Physical Review Letters}, 116(9), 090407.

\bibitem{daletskii1965integration} 
Daletskii, Y. L., \& Krein, S. G. (1965). Integration and differentiation of functions of Hermitian operators and applications to the theory of perturbations. \textit{AMS Translations (Series 2)}, 47, 1--30.

\bibitem{bhatia1997matrix}
Bhatia, R. (1997). \textit{Matrix Analysis}. Springer.



\bibitem{amari2000methods}
Amari, S. I., \& Nagaoka, H. (2000). \textit{Methods of Information Geometry}. American Mathematical Society.




\bibitem{Hayashi2023bregman}
 Hayashi, M., (2023). Bregman divergence based em algorithm and its application to classical and quantum rate distortion theory. \textit{IEEE Transactions on Information Theory}, 69(6), 3460 -- 3492.


\bibitem{baumgratz2014quantifying} 
Baumgratz, T., Cramer, M., \& Plenio, M. B. (2014). Quantifying Coherence. \textit{Physical Review Letters}, 113(14), 140401.
\bibitem{winter2016operational}
Winter, A., \& Yang, D. (2016). Operational Resource Theory of Coherence. \textit{Physical Review Letters}, 116(12), 120404.
\bibitem{streltsov2017quantum}
Streltsov, A., Adesso, G., \& Plenio, M. B. (2017). Quantum Coherence as a Resource. \textit{Reviews of Modern Physics}, 89, 041003.
\bibitem{degen2017quantum}
Degen, C. L., Reinhard, F., \& Cappellaro, P. (2017). Quantum sensing. \textit{Reviews of Modern Physics}, 89, 035002.
\bibitem{biamonte2017quantum}
Biamonte, J., Wittek, P., Pancotti, N., Rebentrost, P., Wiebe, N., \& Lloyd, S. (2017). Quantum machine learning. \textit{Nature}, 549(7671), 195--202.
\end{thebibliography}
\end{document}